\newcommand{\tinyspace}{\mspace{1mu}}
\newcommand{\op}[1]{\operatorname{#1}}
\newcommand{\abs}[1]{\left\lvert\tinyspace #1 \tinyspace\right\rvert}
\newcommand{\norm}[1]{\left\lVert\tinyspace #1 \tinyspace\right\rVert}
\renewcommand{\t}{{\scriptscriptstyle\mathsf{T}}}
\newcommand{\setft}[1]{\mathrm{#1}}
\newcommand{\density}[1]{\setft{D}\left(#1\right)}
\renewcommand{\vec}{\op{vec}}
\newcommand{\im}{\op{im}}
\newcommand{\rank}{\op{rank}}
\def\complex{\mathbb{C}}
\def\I{\mathbb{1}}
\newenvironment{mylist}[1]{\begin{list}{}{
    \setlength{\leftmargin}{#1}
    \setlength{\rightmargin}{0mm}
    \setlength{\labelsep}{2mm}
    \setlength{\labelwidth}{8mm}
    \setlength{\itemsep}{0mm}}}
    {\end{list}}
\def\ot{\otimes}
\newcommand{\inner}[2]{\langle #1 , #2\rangle}
\newcommand{\iinner}[2]{\langle #1 | #2\rangle}
\newcommand{\out}[2]{| #1\rangle\langle #2 |}
\newcommand{\Innerm}[3]{\left\langle #1 \left| #2 \right| #3 \right\rangle}
\newcommand{\pa}[1]{(#1)}
\newcommand{\Pa}[1]{\left(#1\right)}
\newcommand{\Br}[1]{\left[#1\right]}
\newcommand{\set}[1]{\{#1\}}
\newcommand{\Set}[1]{\left\{#1\right\}}
\newcommand{\bra}[1]{\langle#1|}
\newcommand{\ket}[1]{|#1\rangle}
\DeclareMathOperator{\trace}{Tr}
\newcommand{\ptr}[2]{\trace_{#1}\pa{#2}}
\newcommand{\Ptr}[2]{\trace_{#1}\Pa{#2}}
\newcommand{\tr}[1]{\ptr{}{#1}}
\newcommand{\Tr}[1]{\Ptr{}{#1}}
\def\cH{\mathcal{H}}
\def\rA{\mathrm{A}}
\def\rF{\mathrm{F}}\def\rG{\mathrm{G}}
\def\rR{\mathrm{R}}\def\rS{\mathrm{S}}
\def\rU{\mathrm{U}}
\newtheorem{thrm}{Theorem}[section]
\newtheorem{prop}[thrm]{Proposition}
\theoremstyle{definition}
\numberwithin{equation}{section}
\newcounter{questionnumber}
\newcommand{\proj}[1]{| #1\rangle\!\langle #1 |}
\newcommand{\ketbra}[2]{|#1\rangle\!\langle#2|}
\def\r{\rho}
\def\s{\sigma}
\def\dg{\dagger}
\def\ps{\psi}
\newcommand{\jmp}{J. Math. Phys.~}
\newcommand{\jpa}{J. Phys. A~}
\newcommand{\njp}{New. J. Phys.~}
\newcommand{\prl}{Phys. Rev. Lett.~}
\newcommand{\pra}{Phys. Rev. A~}
\newcommand{\pla}{Phys. Lett. A~}
\newcommand{\rmp}{Rev. Math. Phys.~}
\begin{document}

\title{Fidelity between one bipartite quantum state and another undergoing local unitary dynamics}

\author{Lin Zhang$^1$\footnote{E-mail: godyalin@163.com;
linyz@zju.edu.cn}\ , Lin Chen$^{2,3}$\footnote{E-mail:
linchen@buaa.edu.cn (corresponding author)}\ , Kaifeng Bu$^4$\footnote{E-mail:bkf@zju.edu.cn}\\
  {\small $^1$\it Institute of Mathematics, Hangzhou Dianzi University, Hangzhou 310018, PR~China}\\
  {\small $^2$\it School of Mathematics and Systems Science, Beihang University, Beijing 100191, PR~China}\\
  {\small $^3$\it International Research Institute for Multidisciplinary Science, Beihang University, Beijing 100191, PR~China}
   \\
  {\small $^4$\it Department of Mathematics, Zhejiang University, Hangzhou 310027, PR~China}}
\date{}
\maketitle
\begin{abstract}
The fidelity and local unitary transformation are two widely useful
notions in quantum physics. We study two constrained optimization
problems in terms of the maximal and minimal fidelity between two
bipartite quantum states undergoing local unitary dynamics. The
problems are related to the geometric measure of entanglement and
the distillability problem. We show that the problems can be reduced
to semi-definite programming optimization problems. We give
close-form formulae of the fidelity when the two states are both
pure states, or a pure product state and the Werner state. We
explain from the point of view of local unitary actions that why the
entanglement in Werner states is hard to accessible. For general
mixed states, we give upper and lower bounds of the fidelity using
tools such as affine fidelity, channels and relative entropy from
information theory. We also investigate the power of local unitaries, and the
equivalence of the two optimization problems.\\~\\
\textbf{Keywords:} Fidelity; bipartite state; local unitary
transformation; Werner state
\end{abstract}

\section{Introduction}

Finding suitable quantities for characterizing the correlations in a
bipartite or multipartite quantum state has been an important
problem in quantum information theory. Three well-known quantities
are entanglement, fidelity and mutual information \cite{Watrous}.
Investigating the quantities under unitary dynamics has various
physical applications. The local evolution of free entangled states
into bound entangled or separable states in finite time presents the
phenomenon of sudden death of distillability. In the phenomenon, the
fidelity was used to evaluate how close the evolved state is close
to the initial state \cite{scz09}. Next, finding out the local
unitary orbits of quantum states characterizes their properties for
various quantum-information tasks, and it is also mathematically
operational  \cite{SK-jpa}-\cite{GW-prl}. By searching for the
maximally and minimally correlated states on a unitary orbit, the
authors in \cite{Jevtic} quantified the amount of correlations in
terms of the quantum mutual information. The correlations in a
multipartite state within the construction of unitary orbits have
been also examined \cite{Modi}. These applications originate from
the fact that the unitary dynamics influences the  interaction of
quantum systems. It is thus a widely concerned question to
characterize how heavy the influence can be under certain metric
such as the fidelity. The latter has been used to evaluate the
entangled photon pairs obtained by experimental heralded generation
\cite{bcz10}, and the unitary gates of experimentally implementing
quantum error correction \cite{sbm11}. In contrast to the global
unitary dynamics which involves nonlocal correlation, the local
unitary action can be locally performed and does not change the
properties of quantum states. Because of the easy accessibility in
mathematics, the global unitary dynamics has been studied a lot
\cite{Zhang}. In contrast, much less is known about the local
unitary dynamics.

In this paper, we study the maximal and minimal fidelity between two
bipartite quantum states, one of which undergoes arbitrary local
unitary dynamics. To be more specific, let $\r$ and
$\s$ be two bipartite states acting on the Hilbert space
$\cH_1\ot\cH_2$ of dimensions $\dim \cH_i= d_i$, $i=1,2$. Let
$\rU(\cH_1)$ be the unitary group on $\cH_1$. We propose two
constrained optimization problems as computing the functionals
\begin{eqnarray} \label{eq:max}
\rG_{\max}(\r,\s):= \max_{U_i\in\rU(\cH_i):i=1,2}
\rF(\r,(U_1\ot U_2)\s(U_1\ot U_2)^\dagger)
\end{eqnarray}
and
\begin{eqnarray} \label{eq:min}
\rG_{\min}(\r,\s):= \min_{U_i\in\rU(\cH_i):i=1,2}
\rF(\r,(U_1\ot U_2)\s(U_1\ot U_2)^\dagger)
\end{eqnarray}
where $\rF(\rho,\sigma):=\Tr{\sqrt{\sqrt{\rho}\sigma\sqrt{\rho}}}$
is the fidelity between any two semidefinite positive matrices $\rho$ and $\sigma$.
Because of the symmetric property of fidelity, the two functionals
are unchanged under the exchange of arguments $\r$ and
$\s$. Note that if the local unitary action is replaced by
global unitary action, then the problems have been analytically
solved in \cite{Zhang}.

Intuitively, the functionals respectively stand for the maximal and
minimal distance that local unitary can create between quantum
states. The solution to the optimization problems exists because the
unitary group is a compact Lie group. We will show that they can
indeed be reduced to the well-known semidefinite programming (SDP)
problems. So we may efficiently compute the functionals for many
states. Then we derive the close-form formulaes to the functionals
when $\r$ and $\s$ are both pure states, or a pure product
state and the Werner state. We show that in contrast with the
separable Werner state, the entangled Werner state of $d>3$ is
closer to the set of pure separable states under local unitary
dynamics. In this context, the distillability of two-qubit and
two-qutrit Werner states may be distinguished by comparing their
$\rG_{\max}$. For general mixed states, we derive the upper and
lower bounds of the functionals in terms of the monotonicity of
fidelity, quantum channel, the affine fidelity, the integral over
the unitary group via Haar measure. We also investigate how
local unitaries influence the commutativity of quantum states, as well as the equivalence of the two optimization problems.

Our results straightforwardly make progress towards the following
quantum-information problems. First, $\rG_{\max}(\r,\s)$ reduces to
the geometric measure of entanglement (GME) when $\r$ or $\s$ is a
pure product state \cite{shimony95,wg03}. Mathematically the GME of
a quantum state $\r$ is defined as $\max_{\ps}\bra{\ps}\r\ket{\ps}$
where $\ket{\ps}$ is a product state. It is known that the GME of a
bipartite state measures the closest distance between this state and
separable states. It coincides with the intuitive interpretation of
the functionals. The GME is a multipartite entanglement measure and
has been extensively studied recently \cite{zch10,cah14}. The GME
also applies to the construction of initial states for Grover
algorithm \cite{bno02,ssb04}, the discrimination of quantum states
under local operations and classical communications (LOCC)
\cite{mmv07}, and one-way quantum computation \cite{zch10}. For a
review of GME we refer the readers to \cite{cah14}. Recall that the
fully entangled fraction (FEF) for any bipartite state $\rho$ in a $d\ot
d$ system is defined as the maximal overlap with maximally entangled
pure states,
$$
\max_{U,V~\text{unitaries}} \Innerm{\Omega}{(U\ot V)\rho (U\ot
V)^\dagger}{\Omega},
$$
where $\ket{\Omega}=\frac1{\sqrt{d}}\sum^{d-1}_{j=0}\ket{jj}$ is the maximally entangled state. Then,
$\rG_{\max}(\r,\s)$ is the square root of the FEF when one state of $\r$ and $\s$ is a maximally entangled state for
$d_1=d_2$ \cite{Zhao}. In this case, the other state of $\r$ and $\s$ can be any mixed state. The FEF works as the fidelity of optimal
teleportation, and thus has experimental significance
\cite{Bennett}. The close-form for the FEF in a two-qubit system is
derived analytically by using the method of Lagrange multiplier
\cite{GEJ}. Second, $\rG_{\min}(\r,\s)$ is related to the famous
distillability problem in entanglement theory. The latter is related
to the additivity property of distillable entanglement and the
activation of bound entanglement \cite{sst03}. It is known that a
bipartite state $\r$ is distillable if and only if there exists a
positive integer $n$ and a Schmidt-rank two pure state $\ket{\psi}$,
such that $\bra{\psi}(\r^{\ot n})^\Gamma\ket{\psi}<0$
\cite{dss00,peres96}. Our optimization problems imply that $\r$ is
distillable if and only if $\min_{\lambda\in(0,1)}
\rG^2_{\min}(\ket{\phi_\lambda},(\r^{\ot n})^\Gamma+x \I)<x$, where
$\ket{\phi_\lambda}=\sqrt{\lambda}\ket{00}+\sqrt{1-\lambda}\ket{11}$
and $x$ is a positive number such that the second argument is
positive semi-definite. We stress that the difficulty of the
distillability problem mostly arises from the local unitary orbits
involved in the optimization problems above. The distillability
problem has turned out to be hard, and a review of recent progress
can be found in \cite{cd11}. All these problems are thus well
motivated by the findings in this paper.

The rest of the paper is organized as follows. In
Sec.~\ref{sect:analytical-formula}, we show that the computation of
the two functionals $\rG_{\max}$ and $\rG_{\min}$ can be reduced to
the SDP problem. Then we derive the close-form formulae of
functionals when $\r$ and $\s$ are both pure states, or a
pure product state and the Werner state. We also point out a
potential connection between the distillability problem and our
optimization problem for $\rG_{\max}$. Next, several connections,
upper and lower bounds on the functionals are computed in
Sec.~\ref{sect:upper-lower-bounds}. We discuss in Sec.~\ref{sect:examples}, and
conclude in Sec.~\ref{sect:conclusion}.

\section{SDP and analytical formula of functionals}\label{sect:analytical-formula}

We see that $\rF(\rho,(U\ot V)\sigma(U\ot V)^\dagger)$ is a
continuous function over local unitary groups
$\rU(\cH_1)\ot\rU(\cH_2)$. Since $\rU(\cH_1)$ and $\rU(\cH_2)$ are
compact Lie groups, it follows that there exists $U_i,
V_i\in\rU(\cH_i)(i=1,2)$ such that
$$
\rG_{\max} (\rho,\sigma) = \rF(\rho,(U_1\ot U_2)\sigma(U_1\ot
U_2)^\dagger)~~\text{and}~~\rG_{\min} (\rho,\sigma) =
\rF(\rho,(V_1\ot V_2)\sigma(V_1\ot V_2)^\dagger).
$$
Denote $\widehat\s:=(U_1\ot U_2)\s(U_1\ot U_2)^\dagger$ and
$\widetilde\sigma = (V_1\ot V_2)\sigma(V_1\ot V_2)^\dagger$. Thus
$$
\rG_{\max} (\rho,\sigma) = \rF(\rho,\widehat\sigma)
~~\text{and}~~\rG_{\min} (\rho,\sigma) = \rF(\rho,\widetilde
\sigma).
$$
The SDP has been extensively used to treat the distillability
problem \cite{rains01}, the separability problem \cite{dps02}, the
quantification of entanglement \cite{brandao05} and so on
\cite{hhh09}. The SDP for fidelity between two states is obtained by
Watrous \cite{Watrous}. We show that our problems of computing
$G_{\max}$ and $G_{\min}$ can be reduced to the SDP optimization
problem \cite{JW,Kill}, as a primal problem below. Let
$\tau=\widehat\sigma$ or $\widetilde \sigma$. Then
\begin{eqnarray}
\text{maximize:}&&~~~\frac12\Pa{\Tr{X}+\Tr{X^\dagger}},\\
\text{subject to:}&&~~~\Br{
                     \begin{array}{cc}
                       \r & X \\
                       X^\dagger & \tau \\
                     \end{array}
                   }\geqslant 0,
\end{eqnarray}
where $X$ is a operator of order $d_1d_2$. Under the above
constraint the optimal value of $\frac12\Pa{\Tr{X}+\Tr{X^\dagger}}$
is fidelity $\rF(\sigma,\tau)$. Its dual problem is
\begin{eqnarray}
\text{minimize:}&&~~~\frac12\Pa{\inner{\r}{Y}+\inner{\tau}{Z}},\\
\text{subject to:}&&~~~\Br{
                     \begin{array}{cc}
                       Y & -\I \\
                       -\I & Z \\
                     \end{array}
                   }\geqslant 0,
\end{eqnarray}
where $Y,Z$ are Hermitian operators. So we can numerically solve the
optimization problems for many states with high efficiency. On the
other hand, we can  analytically solve the problems for pure states.

\begin{thrm}
\label{thm:rho12} Let $\r=\out{\Phi_{12}}{\Phi_{12}}$ and
$\s=\out{\Psi_{12}}{\Psi_{12}}$, where the spectra of reduced
density operators $\r_1$ and $\s_1$ are $\set{a_1\geqslant
\cdots\geqslant a_N\geqslant0}$ and $\set{b_1\geqslant
\cdots\geqslant b_N\geqslant0}$, respectively, and $d=d_1=d_2$. Then
$\rG_{\max}(\r,\s) =\sum^d_{j=1} \sqrt{a_jb_j}$ and
$\rG_{\min}(\r,\s) = 0$.
\end{thrm}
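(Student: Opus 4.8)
The plan is to convert the fidelity into a pure-state overlap, recast that overlap as a trace over $d\times d$ matrices, and then invoke von Neumann's trace inequality.

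First I would use that $\r$ is a rank-one projector: since $\sqrt{\r}=\r$, a direct computation gives $\sqrt{\r}\,\s'\sqrt{\r}=\Innerm{\Phi_{12}}{\s'}{\Phi_{12}}\,\r$ for any state $\s'$, so $\rF(\r,\s')=\sqrt{\Innerm{\Phi_{12}}{\s'}{\Phi_{12}}}$. Taking $\s'=(U_1\ot U_2)\s(U_1\ot U_2)^\dagger$, this equals $\abs{\Innerm{\Phi_{12}}{U_1\ot U_2}{\Psi_{12}}}$, so both functionals reduce to optimizing this local-unitary overlap. Writing $\ket{\Phi_{12}}=\col{A}$ and $\ket{\Psi_{12}}=\col{B}$ for $d\times d$ matrices $A,B$, the reduced operators are $\r_1=AA^\dagger$ and $\s_1=BB^\dagger$, so the singular values of $A$ and $B$ are the Schmidt coefficients $\sqrt{a_j}$ and $\sqrt{b_j}$. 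Using the identity $(U_1\ot U_2)\col{B}=\col{U_1 B U_2^\t}$ together with $\iinner{\col{A}}{\col{C}}=\Tr{A^\dagger C}$, the overlap becomes $\abs{\Tr{A^\dagger U_1 B U_2^\t}}$.

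For $\rG_{\max}$ I would note that as $U_1,U_2$ range over $\rU(\cH_1),\rU(\cH_2)$, so does $U_2^\t$, and therefore $U_1 B U_2^\t$ ranges over exactly those matrices sharing the singular values $\sqrt{b_j}$ of $B$. Von Neumann's trace inequality then gives $\max_{U_1,U_2}\abs{\Tr{A^\dagger U_1 B U_2^\t}}=\sum_{j=1}^d\sqrt{a_j}\sqrt{b_j}$, the maximum being attained when the Schmidt (singular-vector) bases of $A$ and $B$ are aligned so that the coefficients pair up in decreasing order. Reading off explicit optimal unitaries from the singular value decompositions $A=W_A\,\mathrm{diag}(\sqrt{a_j})\,V_A^\dagger$ and $B=W_B\,\mathrm{diag}(\sqrt{b_j})\,V_B^\dagger$ — for instance $U_1=W_A W_B^\dagger$ and $U_2^\t=V_B V_A^\dagger$ — confirms both attainability and the value $\rG_{\max}(\r,\s)=\sum_{j=1}^d\sqrt{a_j b_j}$.

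For $\rG_{\min}$ it suffices to make the overlap vanish. Setting $U_2=\I$ reduces the task to finding $U_1$ with $\Tr{(BA^\dagger)U_1}=0$. Taking a singular value decomposition $BA^\dagger=W\Sigma V^\dagger$ and substituting $U_1=V P W^\dagger$ turns this into $\Tr{\Sigma P}$; choosing $P$ to be the cyclic-shift permutation matrix, whose diagonal entries all vanish for $d\geqslant 2$, forces $\Tr{\Sigma P}=0$. Thus the overlap is driven to zero and $\rG_{\min}(\r,\s)=0$. The main work is the clean invocation of von Neumann's trace inequality and its equality conditions; the remaining steps are routine bookkeeping with the vectorization identity and the singular value decomposition. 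The only case to flag is the trivial $d=1$, where $\r$ and $\s$ coincide up to a phase and $\rG_{\min}$ cannot be zero; this is excluded by the bipartite assumption $d\geqslant 2$.
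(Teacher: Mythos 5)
Your proof is correct and follows essentially the same route as the paper's: vectorize the pure states, reduce both functionals to $\abs{\Tr{A^\dagger U_1 B U_2^\t}}$, and invoke von Neumann's trace inequality for the maximum. For the minimum the paper only remarks that the Schmidt product states of one state can be locally rotated to be orthogonal to those of the other; your explicit construction via the singular value decomposition of $BA^\dagger$ and a traceless permutation matrix is a concrete (and slightly more careful) realization of that same idea.
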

\begin{proof}
There are two $d_2\times d_1$ matrices $A,B$ such that
$\ket{\Phi_{12}}=\vec(A)$ and $\ket{\Psi_{12}}=\vec(B)$
\cite{Watrous}. Then $(U_1\ot U_2)\ket{\Psi_{12}} =
\vec(U_1BU^\t_2)$ implies $\rF(\r,(U_1\ot
U_2)\s(U_1\ot U_2)^\dagger) =\abs{\Tr{A^\dagger
U_1BU^\t_2}}. $ Since $\rho_1=AA^\dagger$ and $\sigma_1=BB^\dagger$,
the first assertion follows from the fact $
\max_{U,V}\abs{\Tr{XUYV}} = \sum^N_{k=1} s_k(X)s_k(Y) $
\cite{Bhatia}. The second assertion is equivalent to the fact that
the product states in the Schmidt decomposition of $\ket{\Phi_{12}}$
can be by local unitaries converted into states orthogonal to those
of $\ket{\Psi_{12}}$.
\end{proof}

When $\ket{\Phi_{12}}$ is from a maximally entangled basis, the
proof for $\rG_{\min}(\r,\s) = 0$ is similar to the technique for
the known quantum super-dense coding. On the other hand, a
particular case of this theorem has been used to construct a family
of entanglement witnesses \cite{PM}. Next if $\r$ is a pure product
state, then $\rG^2_{\max}(\r,\s)$ amounts to the $S(1)$-norm
$\norm{\s}_{S(1)}$, which is lower bounded by the $(d_1+d_2-1)$-th
eigenvalue in increasing order of $\s$ \cite{Johnston}, where the
$S(k)$-norm of bipartite operator $X$ on $\cH_1\ot\cH_2$ is defined
as
\begin{eqnarray*}
\norm{X}_{S(k)} := \sup_{\ket{u},\ket{v}\in\cH_1\ot\cH_2}
\Set{\abs{\Innerm{u}{X}{v}}:
\rS\rR(\ket{u}),\rS\rR(\ket{v})\leqslant k}.
\end{eqnarray*}
Here $\rS\rR(\ket{w})$ stands for the Schmidt-rank of pure bipartite
state $\ket{w}\in\cH_1\ot\cH_2$, i.e. the number of nonvanishing
coefficients in Schmidt decomposition of $\ket{w}$.

The problem can be analytically solved for the Werner state $ \s(t)
= \frac1{d(d-t)} ( \I_d\ot\I_d - t\sum_{i,j=1}^d\out{ij}{ji} ), $
$t\in[-1,1]$. Indeed \cite{Kribs} implies
\begin{eqnarray}
\rG_{\max}(\r,\s)= \sqrt{\norm{\s}_{S(1)}} =
\sqrt{\frac{1+\abs{\min(t,0)}}{d(d-t)}}.
\end{eqnarray}
One can easily show that the minimum of this functional over $t$ is
achievable when $t=0$. That is, the Werner state becomes the
maximally mixed state, which is at the center of the set of
separable states. Next, the maximum of the functional may be reached
at two points, $t=-1$ and $1$ as plotted in Figure \ref{fig}. The
two points respectively correspond to a separable Werner state and
the most entangled Werner state.  Figure \ref{fig} implies that in
contrast with the separable Werner state, the entangled Werner state
of $d>3$ is closer to the set of pure separable states under local
unitary dynamics. That is, the entanglement of Werner states might
be a more unaccessible quantum correlation than the separability. It
is known that two-qubit entangled states are distillable, and it is
conjectured that two-qutrit entangled Werner states may be
non-distillable \cite{vd06}. Our inequalities imply that the
distillability of Werner states of $d=2$ and $d>2$ may be
essentially distinguished by their distance to the pure separable
states under local unitary dynamics.

\begin{figure}[ht]
\centering\scalebox{0.55}{\includegraphics{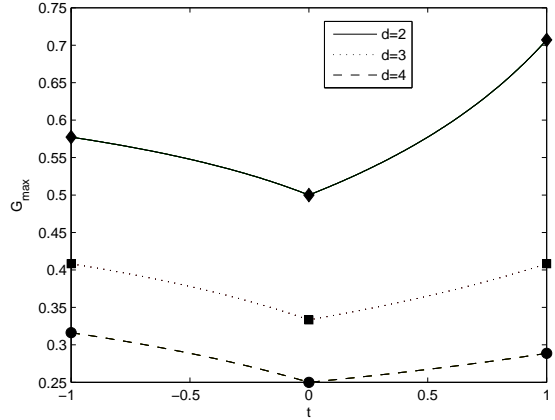}}
\captionsetup{font=small,width=0.55\textwidth}
\caption{The figure shows the maximum distance between the Werner
state and the set of pure product states in terms of the functional
$\rG_{\max}$. The leftmost value is smaller than the rightmost value
for $d=2$, and bigger than the rightmost value for $d=4$. The two
values are equal for $d=3$.} \label{fig}
\end{figure}

In Appendix \ref{app:iso}, we also have computed $\rG_{\max}$ for
the pure product state and the isotropic state. In spite of these
results, finding the analytical solution to the optimization
problems is unlikely because local unitary actions are much more
involved than global unitary action $U$. Indeed, the extremal values
of $\rF(\rho,U\sigma U^\dagger)$ are determined by those global
unitary such that the commutator $[\rho,U\sigma U^\dagger]=0$
\cite{Zhang}. For our purpose we need to replace the global unitary
action by local unitary $U_1\ot U_2$. We do not know whether there
exist $U_1$ and $U_2$ such that $\rF(\r,(U_1\ot
U_2)\s (U_1\ot U_2)^\dagger)$ can attain its extremal
values. In next section we study the upper and lower bounds of
$\rG_{\max}$ and $\rG_{\min}$, as well as their connections.

\section{Upper and lower bounds of functionals}\label{sect:upper-lower-bounds}

The optimization problems ask to find out the critical points $\arg
\rG_{\max}(\r,\s)$ and $\arg \rG_{\min}(\r,\s)$
in the local unitary group, which respectively achieve $G_{\max}$
and $\rG_{\min}$. They respectively refer to the local unitary
operator $V_1\ot V_2$ such that $\rG_{\max}(\r,\s) =
\rF(\r,(V_1\ot V_2)\s(V_1\ot V_2)^\dagger)$ and
$\rG_{\min}(\r,\s) = \rF(\r,(V_1\ot
V_2)\s(V_1\ot V_2)^\dagger)$. Using these facts, we construct
several relations between $G_{\max}$ and $\rG_{\min}$. In
Proposition \ref{pp:max+min} and \ref{pp:rho12}, we show that the
sum and difference of $\rG_{\max}$ and $\rG_{\min}$ (or their
squares) are both upper and lower bounded in terms of some functions
of their arguments such as the rank and fidelity. As a result, we
study the fidelity inequality in Proposition
\ref{prop:fidelity-run}, and the affine fidelity of global unitary
action in Proposition \ref{pp:affine}. We construct the bounds of
$\rG_{\max}$ and $\rG_{\max}$ by using the monotonicity of the
fidelity, the affine fidelity, the integral over the unitary group
via Haar measure, and the relative entropy in Subsect \ref{subsec:a}
and \ref{subsec:b}.

\begin{prop}
\label{pp:max+min} Let $\r,\s$, and $\s'={1\over
d_1d_2-1}(\I_{d_1d_2}-\s)$ be three quantum states. We have
\begin{eqnarray}\label{eq:max+min}
\rank (\r)\geqslant \rG_{\max}(\r,\s)^2 +
(d_1d_2-1)\rG_{\min}(\r,\s')^2 \geqslant 1.
\end{eqnarray}
The first equality holds if there exist two unitary matrices
$W_1,W_2$ such that conditions \eqref{1},\eqref{2} and \eqref{3}
hold. The second equality holds if and only if there exist two
unitary matrices $V_1,V_2$ such that conditions \eqref{4},\eqref{5},
and \eqref{6} hold:
\begin{enumerate}[(i)]
\item\label{1}
$W_1\ot W_2 = \arg G_{\max}(\r,\s) = \arg
G_{\min}(\r,\s')$;
\item\label{2}
$\sqrt{\r}(W_1\ot W_2)\s(W_1\ot W_2)^\dagger
\sqrt{\r}$ and $\sqrt{\r}(W_1\ot
W_2)\s'(W_1\ot W_2)^\dagger \sqrt{\r}$ both have
identical nonzero eigenvalues;
\item\label{3} $\rank \Pa{\sqrt{\r}(W_1\ot W_2)\sqrt{\s}}= \rank
\Pa{\sqrt{\r}(W_1\ot
W_2)\sqrt{\s'}}=\rank(\r)$;
\item\label{4} $V_1\ot V_2 = \arg
\rG_{\max}(\r,\s) = \arg
\rG_{\min}(\r,\s')$;
\item\label{5} $\sqrt{\r}(V_1\ot V_2)\s(V_1\ot
V_2)^\dagger\sqrt{\r}$ has rank one;
\item\label{6} $\sqrt{\r}(V_1\ot V_2)\sigma'(V_1\ot
V_2)^\dagger\sqrt{\r}$ has rank one.
\end{enumerate}
\end{prop}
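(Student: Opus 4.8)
The plan is to derive both inequalities from a single two-sided estimate on the squared fidelity: for any state $\tau$,
$$\Tr{\r\tau}\leqslant \rF(\r,\tau)^2\leqslant \rank(\r)\,\Tr{\r\tau}.$$
I would prove this by diagonalizing the positive operator $K=\sqrt{\r}\,\tau\,\sqrt{\r}$, whose eigenvalues $\lambda_k\geqslant 0$ satisfy $\rF(\r,\tau)=\sum_k\sqrt{\lambda_k}$ and $\sum_k\lambda_k=\Tr{\r\tau}$. The left inequality is immediate, since expanding $\Pa{\sum_k\sqrt{\lambda_k}}^2$ only adds nonnegative cross terms, and it is tight exactly when $K$ has rank at most one. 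The right inequality is Cauchy--Schwarz, $\Pa{\sum_k\sqrt{\lambda_k}}^2\leqslant r\sum_k\lambda_k$, combined with $r=\rank(K)\leqslant\rank(\r)$ (because $\im K\subseteq\im\r$); here equality forces the nonzero $\lambda_k$ to be all equal and $r=\rank(\r)$.

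For the upper bound in \eqref{eq:max+min} the decisive algebraic fact is that a local unitary $W=W_1\ot W_2$ fixes the identity, so $(d_1d_2-1)W\s'W^\dg=\I_{d_1d_2}-W\s W^\dg$ and therefore
$$\Tr{\r\,W\s W^\dg}+(d_1d_2-1)\Tr{\r\,W\s'W^\dg}=\Tr{\r}=1.$$
I would choose $W$ to be a maximizer of $\rG_{\max}(\r,\s)$, so that $\rG_{\max}(\r,\s)^2=\rF(\r,W\s W^\dg)^2$, whereas $\rG_{\min}(\r,\s')^2\leqslant \rF(\r,W\s'W^\dg)^2$ because $\rG_{\min}$ is a minimum over local unitaries. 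Bounding both fidelities above by the lemma and summing gives $\rank(\r)\Pa{\Tr{\r\,W\s W^\dg}+(d_1d_2-1)\Tr{\r\,W\s'W^\dg}}$, which the trace identity collapses to $\rank(\r)$.

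The lower bound is obtained by the mirror-image argument: I would take $V=V_1\ot V_2$ to be a minimizer of $\rG_{\min}(\r,\s')$, so that $\rG_{\min}(\r,\s')^2=\rF(\r,V\s'V^\dg)^2$ while $\rG_{\max}(\r,\s)^2\geqslant \rF(\r,V\s V^\dg)^2$, and then apply the left inequality $\rF(\r,\tau)^2\geqslant\Tr{\r\tau}$ to both terms. The same trace identity turns the resulting sum into $\Tr{\r\,V\s V^\dg}+(d_1d_2-1)\Tr{\r\,V\s'V^\dg}=1$, giving $\rG_{\max}(\r,\s)^2+(d_1d_2-1)\rG_{\min}(\r,\s')^2\geqslant 1$.

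The two equality analyses then come directly from the tightness conditions of the lemma. For the first (upper) bound, condition \eqref{1} makes the step $\rG_{\min}(\r,\s')^2\leqslant\rF(\r,W\s'W^\dg)^2$ an equality, condition \eqref{2} is the Cauchy--Schwarz equality case (equal nonzero eigenvalues) for both $\tau=W\s W^\dg$ and $\tau=W\s'W^\dg$, and condition \eqref{3} is the rank case $r=\rank(\r)$, using $\rank\Pa{\sqrt{\r}\,W\sqrt{\s}}=\rank\Pa{\sqrt{\r}\,W\s W^\dg\sqrt{\r}}$; together they attain $\rank(\r)$. For the second, ``if and only if'', statement I would argue the forward direction by squeezing: if the sum equals $1$, then picking the minimizer $V$ forces every inequality in the lower-bound chain to be an equality, which makes $V$ also a maximizer of $\rG_{\max}(\r,\s)$ (condition \eqref{4}) and forces $\sqrt{\r}\,V\s V^\dg\sqrt{\r}$ and $\sqrt{\r}\,V\s'V^\dg\sqrt{\r}$ to have rank one (conditions \eqref{5},\eqref{6}), while the reverse direction runs the chain backwards. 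I expect the only genuine delicacy to be this equality bookkeeping rather than the inequalities themselves: in particular, accommodating the degenerate case in which a rank-one matrix collapses to zero, and checking that the \emph{single} local unitary demanded by \eqref{1} (resp.\ \eqref{4}) can simultaneously serve as the $\max$ and the $\min$ optimizer.
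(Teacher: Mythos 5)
Your proposal is correct and follows essentially the same route as the paper's own proof: the two-sided bound $\Tr{\r\tau}\leqslant \rF(\r,\tau)^2\leqslant \rank(\r)\Tr{\r\tau}$ obtained from the eigenvalues of $\sqrt{\r}\,\tau\sqrt{\r}$ is exactly the paper's inequality chain \eqref{eq:rankxtrx}--\eqref{eq:trsqrtx}, and the choice of the $\rG_{\max}$-optimizer (resp.\ the $\rG_{\min}$-optimizer) combined with the identity $\s+(d_1d_2-1)\s'=\I_{d_1d_2}$ is precisely how the paper establishes the upper (resp.\ lower) bound and reads off the equality conditions. Your remark about the rank-one case degenerating to the zero operator is a fair point of bookkeeping that the paper itself glosses over.
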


The proof is given in Appendix~\ref{app:proof}. One can easily
verify that the second equality in \eqref{eq:max+min} holds when
$\r$ is pure, or $\r={1\over2}(\proj{00}+\proj{01})$ and
$(V_1\ot V_2)\s(V_1\ot V_2)^\dagger=\proj{00}$. In both
cases, computing $\rG_{\max}$ and $\rG_{\min}$ are equivalent. This
is the first connection we have between the two functionals, so it
is enough to consider only one of  them. We shall discuss the
general case in Sec.~\ref{sect:examples}. Furthermore, conditions 5
and 6 imply that $\r$ has rank at most two. If it has rank two,
then conditions 5 and 6 imply that $\s$ is pure. Thus, at least
one of $\r$ and $\s$ is pure when the second equality in
\eqref{eq:max+min} holds. Next we construct another restriction
between $\rG_{\max}$ and $\rG_{\min}$ or their squares. This is
realized based on the inequalities for the framework of
wave-particle duality \cite{Coles} and the ensembles of Holevo
quantity \cite{Roga}.

\begin{prop}\label{pp:rho12}
Let $\r,\s$, and $\s'={1\over d_1d_2-1}(\I_{d_1d_2}-\s)$ be three
quantum states. Assume that $U_1\ot U_2=\arg \rG_{\max}(\r,\s)$ and
$V_1\ot V_2=\arg \rG_{\min}(\r,\s')$. Then
\begin{eqnarray}
\rG_{\max}(\r,\s) +
\rG_{\min}(\r,\s')\leqslant\sqrt{2+2\rF(\widehat\s,\widehat\s')},
\end{eqnarray}
where $\widehat\s=(U_1\ot U_2)\s(U_1\ot U_2)^\dagger$ and
$\widehat\s'=(V_1\ot V_2)\s'(V_1\ot V_2)^\dagger$. Moreover,
\begin{eqnarray}
\abs{\rG^2_{\max}(\r,\s) - \rG^2_{\min}(\r,\s') }
\leqslant\sqrt{1-\rF^2(\widehat\s,\widehat\s')}
\end{eqnarray}
and
\begin{eqnarray}
\abs{ \rG_{\max}(\r,\s) - \rG_{\min}(\r,\s') }
\leqslant\sqrt{1-\rF^2(\widehat\s,\widehat\s')}.
\end{eqnarray}
\end{prop}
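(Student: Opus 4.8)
The plan is to first strip away the optimization. By the standing assumption that $U_1\ot U_2=\arg\rG_{\max}(\r,\s)$ and $V_1\ot V_2=\arg\rG_{\min}(\r,\s')$, we have $\rG_{\max}(\r,\s)=\rF(\r,\widehat\s)$ and $\rG_{\min}(\r,\s')=\rF(\r,\widehat\s')$, so the three quantities appearing in the proposition are nothing but the pairwise fidelities of the three states $\r,\widehat\s,\widehat\s'$. Hence it suffices to prove, for arbitrary density operators with pairwise fidelities $a=\rF(\r,\widehat\s)$, $b=\rF(\r,\widehat\s')$ and $c=\rF(\widehat\s,\widehat\s')$, the three scalar inequalities $a+b\le\sqrt{2+2c}$, $\abs{a^2-b^2}\le\sqrt{1-c^2}$ and $\abs{a-b}\le\sqrt{1-c^2}$.

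The single tool I would use is the Bures angle $A(\cdot,\cdot)=\arccos\rF(\cdot,\cdot)$, which is a genuine metric on density operators; this is the point where the wave--particle and Holevo-type fidelity inequalities enter, since the triangle inequality for $A$ is equivalent to the standard lower bound on $\rF(\r,\widehat\s')$ in terms of $\rF(\r,\widehat\s)$ and $\rF(\widehat\s,\widehat\s')$. Writing $\alpha=A(\r,\widehat\s)$, $\beta=A(\r,\widehat\s')$, $\gamma=A(\widehat\s,\widehat\s')$, I first record that each of $\alpha,\beta,\gamma$ lies in $[0,\pi/2]$ because $\rF\in[0,1]$, and that the triangle inequality gives the two-sided bound $\abs{\alpha-\beta}\le\gamma\le\alpha+\beta$. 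Note also $\sqrt{2+2c}=2\cos(\gamma/2)$ and $\sqrt{1-c^2}=\sin\gamma$.

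With this dictionary each inequality becomes elementary trigonometry. For the first, $a+b=\cos\alpha+\cos\beta=2\cos\tfrac{\alpha+\beta}{2}\cos\tfrac{\alpha-\beta}{2}\le 2\cos\tfrac{\alpha+\beta}{2}\le 2\cos\tfrac{\gamma}{2}$, using $\cos\tfrac{\alpha-\beta}{2}\le1$ and then $\gamma\le\alpha+\beta$ with the monotonicity of cosine on $[0,\pi/2]$. For the third, assuming $\alpha\ge\beta$ without loss of generality, $\cos\beta-\cos\alpha=2\sin\tfrac{\alpha+\beta}{2}\sin\tfrac{\alpha-\beta}{2}\le 2\cos\tfrac{\alpha-\beta}{2}\sin\tfrac{\alpha-\beta}{2}=\sin(\alpha-\beta)\le\sin\gamma$, where the middle step is the inequality $\sin\tfrac{\alpha+\beta}{2}\le\cos\tfrac{\alpha-\beta}{2}$, valid precisely because $\alpha\le\pi/2$, and the last step uses $\abs{\alpha-\beta}\le\gamma\le\pi/2$. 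For the second, the identity $\cos^2\alpha-\cos^2\beta=\sin(\alpha+\beta)\sin(\beta-\alpha)$ gives $\abs{a^2-b^2}=\sin(\alpha+\beta)\sin\abs{\alpha-\beta}$, a product of two numbers in $[0,1]$, hence bounded by $\min\set{\sin(\alpha+\beta),\sin\abs{\alpha-\beta}}$; since $\gamma$ lies in the interval $[\abs{\alpha-\beta},\alpha+\beta]\subseteq[0,\pi]$, on which $\sin$ is concave, $\sin\gamma$ is at least the smaller of its two endpoint values, which closes the estimate.

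The routine computations are immediate; the part deserving care is the interval bookkeeping. One must check that every argument fed to $\cos$ or $\sin$ stays in the range where the claimed monotonicity or concavity holds — in particular that $\gamma\le\pi/2$ (so that $\cos\gamma\ge0$ and the square-root simplifications are legitimate) and that both halves of the triangle inequality actually get used: $\gamma\le\alpha+\beta$ drives the first inequality, $\abs{\alpha-\beta}\le\gamma$ drives the third, and the second needs the full containment $\gamma\in[\abs{\alpha-\beta},\alpha+\beta]$ combined with concavity of $\sin$. The only genuinely non-obvious elementary step is $\sin\tfrac{\alpha+\beta}{2}\le\cos\tfrac{\alpha-\beta}{2}$, which I would prove by rewriting the right side as $\sin(\tfrac{\pi}{2}-\tfrac{\alpha-\beta}{2})$ and comparing arguments, the comparison reducing exactly to $\alpha\le\pi/2$.
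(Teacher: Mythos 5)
Your proof is correct, but it takes a genuinely different route from the paper. The paper's proof is a two-line citation: after observing that $\rG_{\max}(\r,\s)=\rF(\r,\widehat\s)$ and $\rG_{\min}(\r,\s')=\rF(\r,\widehat\s')$, it invokes three known fidelity inequalities from the literature (the Coles--Kaniewski--Wehner/Rastegin identity $\max_\r(\rF(M,\r)+\rF(N,\r))=\sqrt{\Tr{M}+\Tr{N}+2\rF(M,N)}$ and the two Fannes--de Melo--Roga--\.{Z}yczkowski bounds) and is done. You perform the same initial reduction to the three pairwise fidelities of $\r,\widehat\s,\widehat\s'$, but then re-derive all three scalar inequalities from a single primitive: the triangle inequality for the Bures angle $\arccos\rF$. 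I checked your trigonometric bookkeeping and it holds up --- the sum-to-product identities, the step $\sin\tfrac{\alpha+\beta}{2}\leqslant\cos\tfrac{\alpha-\beta}{2}$ (which indeed reduces to $\alpha\leqslant\pi/2$), the bound $xy\leqslant\min(x,y)$ for $x,y\in[0,1]$, and the concavity argument placing $\sin\gamma$ above the smaller endpoint value of $[\,\abs{\alpha-\beta},\alpha+\beta\,]$ are all valid on the stated ranges; note that for the first inequality you only obtain the ``$\leqslant$'' direction of the cited equality, which is all that is needed. What your approach buys is self-containedness and a unified explanation: all three bounds are exposed as elementary consequences of one metric fact, and it becomes visible which half of the triangle inequality drives which bound. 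What it costs is that the metric property of $\arccos\rF$ is itself used as a black box (a standard fact, but one you should cite, e.g.\ via Uhlmann's theorem), so in the end you have traded three citations for one plus a page of trigonometry.
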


\begin{proof}
The assertion straightforwardly follow from three inequalities in
\cite{Coles,AR} and \cite{Roga}:
\begin{eqnarray}
\max_{\r} (\rF(M,\r)+\rF(N,\r)) = \sqrt{\Tr{M}+\Tr{N} + 2\rF(M,N)},
\end{eqnarray}
\begin{eqnarray}
\abs{\rF^2(\s,\r)-\rF^2(\tau,\r)} \leqslant \sqrt{1-\rF^2(\s,\tau)},
\end{eqnarray}
and
\begin{eqnarray}
\abs{\rF(\s,\r)-\rF(\tau,\r)}\leqslant \sqrt{1-\rF^2(\s,\tau)}.
\end{eqnarray}
where $M,N$ are positive semidefinite operators, and $\r,\s,\tau$
are three quantum states.
\end{proof}
The results show that the characterization of fidelity is important
for obtaining a tight bound for the functionals. We study the
characterization using an inequality for the approximation of Markov
chain property \cite{Fawzi}.
\begin{prop}\label{prop:fidelity-run}
Let $\rho$ and $\sigma$ be two quantum states on $\complex^d$, and
$\Phi$ be a quantum channel over $\complex^d$. Then
\begin{eqnarray}\label{eq:prop}
\rF(\rho,\sigma) \leqslant \sum_j \rF(M_j\rho M^\dagger_j,M_j\sigma
M^\dagger_j)\leqslant \rF(\Phi(\rho),\Phi(\sigma)),
\end{eqnarray}
where $\Phi(*)=\sum_j M_j * M^\dagger_j$ is any Kraus representation
of $\Phi$.
\end{prop}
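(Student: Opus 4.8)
The plan is to establish the two inequalities by separate arguments and then chain them. For the right-hand inequality I would invoke the primal SDP characterization of fidelity recalled above, while for the left-hand inequality I would pass to the Stinespring dilation of $\Phi$ and use the monotonicity of fidelity. Both arguments rest on one elementary fact: fidelity is \emph{additive over orthogonal blocks}, i.e. if the positive semidefinite operators $A_j$ (resp.\ $B_j$) are supported on mutually orthogonal subspaces then $\rF\Pa{\sum_j A_j,\sum_j B_j}=\sum_j\rF(A_j,B_j)$, which is immediate from $\sqrt{\sum_j A_j}=\sum_j\sqrt{A_j}$ in that situation.

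For the second inequality, the SDP says that for each $j$ there is an operator $X_j$ with $\frac12\Pa{\Tr{X_j}+\Tr{X_j^\dg}}=\rF(M_j\rho M_j^\dg,M_j\sigma M_j^\dg)$ and with the block operator having diagonal entries $M_j\rho M_j^\dg,\,M_j\sigma M_j^\dg$ and off-diagonal entry $X_j$ positive semidefinite. Summing these positive semidefinite block operators over $j$ produces a positive semidefinite block operator whose diagonal entries are exactly $\Phi(\rho)$ and $\Phi(\sigma)$ and whose off-diagonal entry is $X:=\sum_j X_j$. Thus $X$ is feasible for the SDP computing $\rF(\Phi(\rho),\Phi(\sigma))$, and since the objective $\frac12\Pa{\Tr{X}+\Tr{X^\dg}}=\sum_j\frac12\Pa{\Tr{X_j}+\Tr{X_j^\dg}}$ is linear in $X$, it equals $\sum_j\rF(M_j\rho M_j^\dg,M_j\sigma M_j^\dg)$; being a feasible value, it is at most the optimum $\rF(\Phi(\rho),\Phi(\sigma))$.

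For the first inequality I would write $\Phi$ via its Stinespring isometry $V=\sum_j M_j\ot\ket{j}$, which obeys $V^\dg V=\sum_j M_j^\dg M_j=\I$ because $\Phi$ is trace-preserving, so that $\rF(\rho,\sigma)=\rF(V\rho V^\dg,V\sigma V^\dg)$ by the isometric invariance of fidelity. The completely dephasing channel $\mathcal D$ that projects the ancilla onto the basis $\ket{j}$ is a genuine quantum channel, so fidelity does not decrease under it; it maps $V\rho V^\dg=\sum_{j,k}M_j\rho M_k^\dg\ot\ketbra{j}{k}$ to the block-diagonal operator $\sum_j M_j\rho M_j^\dg\ot\proj{j}$, and likewise for $\sigma$. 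By block-additivity the fidelity of these two block-diagonal operators is $\sum_j\rF(M_j\rho M_j^\dg,M_j\sigma M_j^\dg)$, and monotonicity under $\mathcal D$ gives $\rF(\rho,\sigma)\leqslant\sum_j\rF(M_j\rho M_j^\dg,M_j\sigma M_j^\dg)$.

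The two steps are short, so the real content is in the packaging. The main obstacle is recognizing that the left-hand bound is exactly a refinement of the usual data-processing inequality for fidelity: the coherent dilation $V\rho V^\dg$ carries off-diagonal ancilla terms $M_j\rho M_k^\dg$ with $j\ne k$ that precisely encode the gap, and applying $\mathcal D$ to destroy them is what converts the single number $\rF(\rho,\sigma)$ into the branch-wise sum. Verifying the block-additivity identity and checking that $\mathcal D$ (and the isometry $V$) are legitimate maps on which the monotonicity of fidelity may be invoked are the only points requiring care.
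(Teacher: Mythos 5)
Your proof is correct, but it takes a genuinely different route from the paper's. For the left inequality the paper simply imports Lemma~B.7 of Fawzi--Renner (for an identity resolution $\sum_j E_j=\I$ one has $\rF(\rho,\sigma)\leqslant\sum_j\rF(E_j\rho E_j,\sigma)$), substitutes $E_j=M_j^\dg M_j$, and then moves the Kraus operators across the fidelity with the identity $\rF(W^\dg\rho W,\sigma)=\rF(\rho,W\sigma W^\dg)$ (their Lemma~B.6); you instead re-derive the needed estimate from first principles via the Stinespring isometry $V=\sum_j M_j\ot\ket{j}$, the dephasing channel on the ancilla, block-additivity of fidelity over orthogonal supports, and data processing --- in effect an independent proof of the Fawzi--Renner lemma in the case at hand, which also makes transparent that the gap in the first inequality is carried by the off-diagonal blocks $M_j\rho M_k^\dg$, $j\neq k$. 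For the right inequality the paper just cites ``the concavity of fidelity'' (i.e.\ superadditivity $\sum_j\rF(A_j,B_j)\leqslant\rF(\sum_j A_j,\sum_j B_j)$), whereas you prove that superadditivity by summing feasible points of the primal SDP recalled in Section~2, which ties the proposition back to the SDP machinery of the paper. The paper's proof is more economical; yours is self-contained and arguably more illuminating. One small point of care: your block-additivity lemma should require that $A_j$ and $B_k$ have orthogonal supports for $j\neq k$ (not merely that the $A_j$ are mutually orthogonal and the $B_j$ are mutually orthogonal); in your application this is automatic because both families sit under the same ancilla projectors $\proj{j}$, so the argument stands.
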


\begin{proof}
From Lemma~B.7 in \cite{Fawzi}, we know that for an identity
resolution $\sum_j E_j=\I$,
$$
\rF(\rho,\sigma)\leqslant \sum_j \rF(E_j \rho E_j,\sigma).
$$
Since $\Phi(*)=\sum_j M_j * M^\dagger_j$ is a quantum channel,
$\sum_k M^\dagger_jM_j=\I$. Assuming $E_j=M^\dagger_jM_j$ in the
above inequality, we have
\begin{eqnarray}
\rF(\rho,\sigma)\leqslant \sum_j \rF(M_j^\dagger M_j \rho
M_j^\dagger M_j,\sigma).
\end{eqnarray}
Again, by employing the following simple fact, Lemma~B.6 in
\cite{Fawzi}, $\rF(W^\dagger\rho W,\sigma) = \rF(\rho,W\sigma
W^\dagger)$, we obtain the inequality in \eqref{eq:prop}. The other
inequality is from the concavity of fidelity.
\end{proof}

\begin{prop}\label{pp:affine}
For any given two quantum states $\rho$ and $\sigma$ on
$\complex^d$, there exists a unitary operator $U_0$ on $\complex^d$,
which depends on $\rho$ and $\sigma$, such that
\begin{eqnarray}
\rF(\rho,\sigma) = \rA(\rho,U_0\sigma U_0^\dagger),
\end{eqnarray}
where $\rA(\rho,\sigma)$ is called \emph{affine fidelity}
\cite{Luo,Ma}, defined by
$\rA(\rho,\sigma):=\Tr{\sqrt{\rho}\sqrt{\sigma}}$.
\end{prop}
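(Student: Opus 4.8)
The plan is to produce $U_0$ by an intermediate-value argument rather than by explicit construction. Set
\[
g(V):=\rA(\rho,V\sigma V^\dagger)=\Tr{\sqrt{\rho}\,V\sqrt{\sigma}\,V^\dagger},
\]
using $\sqrt{V\sigma V^\dagger}=V\sqrt{\sigma}V^\dagger$; this is a real-valued continuous function on the unitary group $\rU(\complex^d)$, which is compact and connected, so its image is a closed interval $[\min_V g,\max_V g]$. It therefore suffices to show that $\rF(\rho,\sigma)$ lies in this interval, for then some $U_0$ satisfies $g(U_0)=\rF(\rho,\sigma)$, which is exactly the claim.

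For the lower endpoint I would simply evaluate $g$ at the identity. Recalling the standard identity $\rF(\rho,\sigma)=\tnorm{\sqrt{\rho}\sqrt{\sigma}}$ (which holds by the symmetry of fidelity), we have $g(\I)=\rA(\rho,\sigma)=\Tr{\sqrt{\rho}\sqrt{\sigma}}$ and, since this quantity is real and nonnegative, $\rF(\rho,\sigma)=\tnorm{\sqrt{\rho}\sqrt{\sigma}}\ge\abs{\Tr{\sqrt{\rho}\sqrt{\sigma}}}=\Tr{\sqrt{\rho}\sqrt{\sigma}}$. Hence $\min_V g\le g(\I)\le\rF(\rho,\sigma)$ at once, with no further work.

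The upper endpoint is the substantive step. Writing $\lambda_1\ge\cdots\ge\lambda_d$ and $\mu_1\ge\cdots\ge\mu_d$ for the eigenvalues of $\rho$ and $\sigma$, so that $\sqrt{\rho}$ and $\sqrt{\sigma}$ have eigenvalues $\sqrt{\lambda_i}$ and $\sqrt{\mu_i}$, von Neumann's trace inequality for Hermitian operators gives $\max_V g(V)=\sum_i\sqrt{\lambda_i}\sqrt{\mu_i}$. Concretely, $g(V)=\sum_{i,j}\sqrt{\lambda_i}\sqrt{\mu_j}\,D_{ij}$ with $D_{ij}=\abs{\iinner{e_i}{Vf_j}}^2$ doubly stochastic, and a bilinear form in $D$ is extremised at a permutation, which is attainable by a suitable $V$. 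It then remains to prove $\rF(\rho,\sigma)=\tnorm{\sqrt{\rho}\sqrt{\sigma}}\le\sum_i\sqrt{\lambda_i}\sqrt{\mu_i}$; this follows from the log-majorization $\prod_{k=1}^m s_k(\sqrt{\rho}\sqrt{\sigma})\le\prod_{k=1}^m\sqrt{\lambda_k}\sqrt{\mu_k}$ of the singular values of a product \cite{Bhatia}, which implies weak majorization and hence the corresponding inequality of the partial sums.

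Combining the two bounds yields $\min_V g\le g(\I)\le\rF(\rho,\sigma)\le\max_V g$, and the intermediate value theorem supplies $U_0$ with $g(U_0)=\rF(\rho,\sigma)$. The main obstacle is precisely the upper bound $\rF(\rho,\sigma)\le\sum_i\sqrt{\lambda_i\mu_i}$, i.e. the product singular-value majorization; the continuity and connectedness input, the von Neumann endpoint formula, and the trivial lower bound via $g(\I)$ are all routine. I would also note that no invertibility hypothesis is needed, since $g$ is real and well defined for all positive semidefinite $\rho,\sigma$, so rank-deficient cases cause no difficulty.
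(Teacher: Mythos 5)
Your proposal is correct, and it shares the paper's overall scaffolding — define $g(V)=\rA(\rho,V\sigma V^\dagger)$, observe that continuity plus the compactness and connectedness of $\rU(\complex^d)$ make the image of $g$ a closed interval, get the lower endpoint from $g(\I)=\Tr{\sqrt{\rho}\sqrt{\sigma}}\leqslant\rF(\rho,\sigma)$, and conclude by the intermediate value theorem — but it diverges at the one substantive step, the upper endpoint. The paper reaches $\rF(\rho,\sigma)\leqslant g(V)$ by invoking the representation $\rF(\rho,\sigma)=\Tr{\exp(\log\sqrt{\rho}+V\log\sqrt{\sigma}V^\dagger)}$ for a suitable unitary $V$ (from \cite{Zhang}) and then applying the Golden--Thompson inequality; you instead compute $\max_V g(V)=\sum_i\sqrt{\lambda_i^\downarrow\mu_i^\downarrow}$ via von Neumann's trace theorem and then bound $\rF(\rho,\sigma)=\tnorm{\sqrt{\rho}\sqrt{\sigma}}\leqslant\sum_i\sqrt{\lambda_i^\downarrow\mu_i^\downarrow}$ through Horn's log-majorization of the singular values of a product. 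Both routes are valid; the paper's is shorter given the cited result and directly exhibits a unitary at which $g$ exceeds $\rF$, while yours is more self-contained (only standard matrix-analysis facts), identifies the exact maximum of $g$, and — as you note — avoids the implicit full-rank assumption that the logarithms $\log\sqrt{\rho}$ and $\log\sqrt{\sigma}$ impose on the paper's argument, so it covers rank-deficient states without a separate limiting argument. Two cosmetic points: the functional $D\mapsto\sum_{i,j}\sqrt{\lambda_i}\sqrt{\mu_j}D_{ij}$ is linear rather than bilinear in $D$, and since not every doubly stochastic matrix is unistochastic you should keep (as you do) the remark that the optimal permutation matrix is itself realizable by a unitary, so the Birkhoff bound is attained.
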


\begin{proof}
Consider a map defined over the unitary group $\rU(\complex^d)$ in
the following:
$$
g(U) = \rA(\rho,U\sigma U^\dagger).
$$
Apparently, $g$ is a continuous map over $\rU(\complex^d)$. Furthermore,
$g(\I_d)\leqslant \rF(\rho,\sigma)$ s a basic matrix inequality. It implies that the affine fidelity is upper bounded by the fidelity. Since the unitary group
$\rU(\complex^d)$ is a compact and connected Lie group, it follows
that the image of $\rU(\complex^d)$ under the map $g$ is a closed
interval. Thus it suffices to show that there exists a unitary
operator $V$ such that $\rF(\rho,\sigma)\leqslant g(V)$. We proceed
with the following result obtained in \cite{Zhang}: there exists a
unitary operator $V\in\rU(\complex^d)$ such that
$$
\rF(\rho,\sigma) = \Tr{\exp\Pa{\log\sqrt{\rho} +
V\log\sqrt{\sigma}V^\dagger}}.
$$
By Golden-Thompson inequality $\Tr{e^{A+B}}\leqslant \Tr{e^Ae^B}$,
where $A$ and $B$ are Hermitian, we get that
$\rF(\rho,\sigma)\leqslant g(V)$. Now the fidelity
$\rF(\rho,\sigma)\in \im(g)$, the image of $g$. Therefore there
exists a unitary operator $U_0\in\rU(\complex^d)$ satisfying the
property that we want. We are done.
\end{proof}

In the following two subsections, we will respectively derive the
upper and lower bounds of $\rG_{\max}$ and $\rG_{\min}$.

\subsection{Bounds of $\rG_{\max}$} \label{subsec:a}

The monotonicity of the fidelity implies the upper bound
\begin{eqnarray}
G_{\max}(\r,\s) \leqslant \min \bigg(
\max_{U_1}\rF(\rho_1,U_1\sigma_1 U^\dagger_1),
\max_{U_2}\rF(\rho_2,U_2\sigma_2 U^\dagger_2), \max_{U_{12}}
\rF(\r,U_{12}\s U_{12}^\dagger) \bigg).
\end{eqnarray}
This bound is analytically derivable, as we have computed explicitly
$\max_U \rF(\rho, U\sigma U^\dagger)$ and $\min_U \rF(\rho, U\sigma
U^\dagger)$ \cite{Zhang}. This result directly applies to the
computation of $\max_{U_{12}} \rF(\r,U_{12}\s U_{12}^\dagger)$. Next
we obtain a lower bound of $\rG_{\max}$. From a well-known fact in
matrix analysis: $\abs{\tr{A}} \leqslant \tr{\abs{A}}$ for any
matrix $A$, where $\abs{A}=\sqrt{A^\dagger A}$, letting
$A=\sqrt{\rho}\sqrt{\sigma}$ gives rise to
$$
\abs{\Tr{\sqrt{\rho}\sqrt{\sigma}}}\leqslant
\Tr{\abs{\sqrt{\rho}\sqrt{\sigma}}}.
$$
Clearly $\Tr{\sqrt{\rho}\sqrt{\sigma}}$ is a nonnegative real number
and $\rF(\rho,\sigma)=\Tr{\abs{\sqrt{\rho}\sqrt{\sigma}}}$, thus
$\rF(\rho,\sigma)\geqslant \rA(\rho,\sigma)$ for any two states
$\rho,\sigma$, then we have
\begin{eqnarray}
\rG_{\max}(\r,\s)&\geqslant&
\max_{U_i\in\rU(\cH_i):i=1,2} \rA(\r,(U_1\ot
U_2)\s(U_1\ot U_2)^\dagger)\notag\\
&\geqslant& \int_{\rU(d_1)}\int_{\rU(d_2)}\rA(\r,(U_1\ot
U_2)\s(U_1\ot U_2)^\dagger)d\mu(U_1) d\mu(U_2)\notag\\
&=&\frac{\Tr{\sqrt{\r}}\Tr{\sqrt{\s}}}{d_1d_2}.
\end{eqnarray}
We have denoted the uniform normalized Haar measure by $\mu(U)$ over
the unitary group. On the other hand, the inequality \cite{Mosonyi}
\begin{eqnarray}\label{eq:mosonyi}
\rS(\rho||\sigma)\geqslant -2\log \rF(\rho,\sigma)
\end{eqnarray}
where $\rS(\rho||\sigma):=\Tr{\rho(\log\rho-\log\sigma)}$ is the
quantum relative entropy, implies
\begin{eqnarray}
\min_{U_i\in\rU(\cH_i):i=1,2} \rS(\r||(U_1\ot
U_2)\s(U_1\ot U_2)^\dagger) \geqslant -2\log
\rG_{\max}(\r,\s).
\end{eqnarray}
So we have obtained a lower bound of \eqref{eq:max}
\begin{eqnarray}
&&\rG_{\max}(\r,\s) \notag\\
&&\geqslant \max\bigg\{
\frac{\Tr{\sqrt{\r}}\Tr{\sqrt{\s}}}{d_1d_2},
\exp\bigg(-{1\over2} \min_{U_i\in\rU(\cH_i):i=1,2}
\rS(\r||(U_1\ot U_2)\s(U_1\ot U_2)^\dagger) \bigg)
\bigg\}.
\end{eqnarray}

\subsection{Bounds of $\rG_{\min}$} \label{subsec:b}

Clearly
\begin{eqnarray}
\rG_{\min}(\r,\s) &\leqslant&
\min_{U_2}\int_{\rU(d_1)}\rF(\r,(U_1\ot
U_2)\s(U_1\ot U_2)^\dagger)d\mu(U_1) \notag\\
&\leqslant& \min_{U_2}\rF(\r,\I_{d_1}/{d_1}\ot U_2\sigma_2
U^\dagger_2)\notag
\\
&\leqslant& \frac1{\sqrt{d_1d_2}}\Tr{\sqrt{\r}},
\end{eqnarray}
where the last inequality follows from the integral over $\rU(d_2)$.
By exchanging $\r$ and $\s$ in the inequality, we
obtain an upper bound of \eqref{eq:min}
\begin{eqnarray}
\rG_{\min}(\r,\s)\leqslant \frac1{\sqrt{d_1d_2}}\min
\Set{\Tr{\sqrt{\r}},\Tr{\sqrt{\s}}}.
\end{eqnarray}

Next, we study the lower bound of $\rG_{\min}$. Let $\rS(\rho):=-\Tr
{\rho \log \rho}$ be the von Neumann entropy with the natural
logarithm $\log$ and $0\log0\equiv0$. It is obtained in \cite{LZ}
that for full-ranked states $\rho,\sigma\in\density{\complex^d}$, we
have
\begin{eqnarray}\label{eq:fullrank}
\rF(\rho,\sigma) \geqslant \Tr{\sqrt{\rho}} \times \exp\Pa{\frac12
\sum_j\lambda^\downarrow_j(\rho) \log\lambda^\uparrow_j(\sigma)},
\end{eqnarray}
where $\lambda^\downarrow(\rho)$ denotes the set of eigenvalues of
$\rho$ in the decreasing order and $\lambda^\uparrow(\sigma)$
denotes the set of eigenvalues of $\sigma$ in the increasing order.
It is known that the fidelity is unchanged under the exchange of
arguments. Assuming $\rho=\r$, $\sigma=(U_1\ot
U_2)\s(U_1\ot U_2)^\dagger$  and exchanging them in
\eqref{eq:fullrank}, we obtain a constant lower bound of
\eqref{eq:min}
\begin{eqnarray}
\rG_{\min}(\r,\s) &\geqslant& \max\bigg\{
\Tr{\sqrt{\r}} \times\exp\Pa{\frac12
\sum_j\lambda^\downarrow_j(\r)
\log\lambda^\uparrow_j(\s)},
\notag\\
&&~~~~~~\Tr{\sqrt{\s}} \times\exp\Pa{\frac12
\sum_j\lambda^\downarrow_j(\s)
\log\lambda^\uparrow_j(\r)},
\notag\\
&&~~~~~~\exp\Pa{-{1\over2}\max_{U_i\in\rU(\cH_i):i=1,2}
\rS(\r||(U_1\ot U_2)\s(U_1\ot U_2)^\dagger)},
\bigg\}
\end{eqnarray}
where the last argument follows from \eqref{eq:mosonyi}.

\section{Discussion}\label{sect:examples}

In this section we investigate the power of local unitaries for the
commutativity of quantum states, the quantification of the
commutativity, and the equivalence of the two optimization problems.
They generalize the previous discussion.

Commutative quantum states can be prepared in the same pure state
basis. They not only share operational mathematical properties, and
also can save resources in experiments. One might expect that, under
local unitary dynamics we could make two non-commutative quantum
states become commutative. That is, given two mixed states
$\r$ and $\s$ with $[\r,\s]\neq0$,
are always there local unitaries $U_1$ and $U_2$ such that
$[\r,(U_1\ot U_2)\s (U_1\ot U_2)^\dagger]=0$?
Unfortunately The answer is negative. Indeed, let $\r=\sum_i
p_i\proj{a_i}$ and $\s=\sum_j q_j\proj{b_j}$ be their
respective spectral decomposition, and $p_i,q_j>0$ for all $i,j$. If
the answer is yes, then there exist two sets $\set{\ket{a_i}}$ and
$\set{\ket{b_j}}$, such that up to overall phases they are from the
same o. n. basis of $\cH_1\ot\cH_2$. Below we construct a
counterexample to this statement. Hence the answer is no. We
consider two two-qubit states
\begin{eqnarray}
&&\r={1\over2}\proj{\Psi^+}+{1\over3}\proj{\Psi^-}+{1\over6}\proj{\Phi^+},
\\
&&\s={2\over3}\proj{00}+{1\over3}\proj{11},
\end{eqnarray}
where $\ket{\Psi^{\pm}}={1\over\sqrt2}(\ket{00}\pm\ket{11})$ and
$\ket{\Phi^{\pm}}={1\over\sqrt2}(\ket{01}\pm\ket{10})$ are the
standard EPR pairs. Since the positive eigenvalues all have
multiplicity one, the eigenstates of them of $\r$ and $\s$
are respectively equal to $\{\ket{\Psi^\pm},\{\ket{\Phi^+}\}$ and
$\{\ket{00},\ket{11}\}$, up to overall phases on these states. If
the answer is yes, then there are local unitaries $U_1,U_2$ such
that $\set{\ket{\Psi^\pm},\{\ket{\Phi^+}}$ and $\set{(U_1\ot
U_2)\ket{00},(U_1\ot U_2)\ket{11}}$ are from the same o. n. basis of
$\complex^2\ot\complex^2$. Since the former consists of entangled
states and the latter consists of separable states, they have to be
pairwise orthogonal. It is impossible because the former and latter
respectively span a 3-dimensional and 2-dimensional subspace in
$\complex^2\ot\complex^2$. \qed


Another interesting problem is whether the two optimization problems
are equivalent for general $\rho$ and $\sigma$. The equivalence
would imply the sufficiency of solving only one of them. We propose
to study two related functionals
$$
\max_{U_i\in\rU(\cH_i):i=1,2}
\Tr{\r(U_1\ot U_2)\s(U_1\ot U_2)^\dagger}
$$
and
$$\min_{U_i\in\rU(\cH_i):i=1,2} \Tr{\r(U_1\ot U_2)\s(U_1\ot
U_2)^\dagger},
$$
as they also measure the similarity between mixed states $\r$
and $\s$. The computation of the two functionals is equivalent,
because if we can compute the former for any $\r$ and
$\s$, then we can also compute the latter by replacing
$\s$ by $\I-\s$; and vice versa. So it suffices to compute
$$
\max_{U_i\in\rU(\cH_i):i=1,2} \Tr{\r(U_1\ot U_2)\s(U_1\ot
U_2)^\dagger}.
$$
Next, it is known that $\abs{\tr{U A }} \leqslant
\tr{\sqrt{A^\dg A}}$ for any unitary $U$ and any matrix $A$. We have
\begin{eqnarray}
\max_{U_i\in\rU(\cH_i):i=1,2} \Tr{\r(U_1\ot U_2)\s(U_1\ot
U_2)^\dagger} \leqslant \rG_{\max}(\r^2,\s^2).
\end{eqnarray}
So the two functionals are not only physically, but also
mathematically related to $\rG_{\max}$ and $\rG_{\min}$.

\section{Conclusions}\label{sect:conclusion}

In this paper we have studied two optimization problems that are
related to many quantum-information problems. The problems are
generally solvable by the SDP, and we manged to work out the
analytical formulae for some states. For mixed states we have
constructed many upper and lower bounds of the two functionals. We
have shown that the entanglement of Werner states might be a more
unaccessible quantum correlation than the separability in terms of
the local unitary dynamics. We have investigated the power of local
unitaries for the commutativity of quantum states and the equivalence of the two
optimization problems. Apart from the problems proposed in last
section, studying the relation between the distillability of Werner
states and their distance to the separable states may shed new light
to the distillability problem.

{\it Acknowledgement.---} We thank Marco Piani for pointing out the
last statement in Theorem \ref{thm:rho12} and Ref. \cite{PM}, as
well as a few minor errors in an early version of this paper. LZ is
grateful for financial support from National Natural Science
Foundation of China (No.11301124). LC was supported by the NSF of China (Grant No. 11501024), and the Fundamental Research Funds for the Central Universities (Grant Nos. 30426401 and 30458601).

\section*{Appendix}

\subsection*{Isotropic state} \label{app:iso}

The isotropic state is the convex mixture of a maximally entangled
state and the maximally mixed state:
\begin{eqnarray*}
\rho_{\text{iso}}(\lambda) = \frac{1-\lambda}{d^2-1}\Pa{\I_d\ot\I_d
- \out{\Psi^+}{\Psi^+}} + \lambda\out{\Psi^+}{\Psi^+},
\end{eqnarray*}
where $\lambda\in[0,1]$ and
$\ket{\Psi^+}=\frac1{\sqrt{d}}\sum^d_{j=1}\ket{jj}$. By
Theorem~\ref{thm:rho12}, we have
$$
\max_{\ket{u},\ket{v}} \abs{\iinner{uv}{\Psi^+}}^2 =
\frac1d~\text{and}~\min_{\ket{u},\ket{v}}
\abs{\iinner{uv}{\Psi^+}}^2 = 0.
$$
Thus
\begin{eqnarray}
\Innerm{uv}{\r_{\text{iso}}(\lambda)}{uv} = \frac{1-\lambda}{d^2-1}
+ \frac{d^2\lambda-1}{d^2-1}\abs{\iinner{uv}{\Psi^+}}^2.
\end{eqnarray}
To further characterize the maximum and minimum of this function, we
discuss two subcases.
\\
(1). If $\frac1{d^2}\leqslant\lambda \leqslant1$, then
$\max_{\ket{u},\ket{v}}\Innerm{uv}{\r_{\text{iso}}(\lambda)}{uv} =
\frac{d\lambda+1}{d(d+1)}$ and
$\min_{\ket{u},\ket{v}}\Innerm{uv}{\r_{\text{iso}}(\lambda)}{uv} =
\frac{1-\lambda}{d^2-1}$;\\
(2). If $0\leqslant \lambda<\frac1{d^2}$, then
$\min_{\ket{u},\ket{v}}\Innerm{uv}{\r_{\text{iso}}(\lambda)}{uv} =
\frac{d\lambda+1}{d(d+1)}$ and
$\max_{\ket{u},\ket{v}}\Innerm{uv}{\r_{\text{iso}}(\lambda)}{uv} =
\frac{1-\lambda}{d^2-1}$. In summary, we have
\begin{eqnarray}
\rG_{\max}(\r_{\text{iso}}(\lambda),\out{uv}{uv}) =
\max\Pa{\sqrt{\frac{d\lambda+1}{d(d+1)}},\sqrt{\frac{1-\lambda}{d^2-1}}},
\end{eqnarray}
and
\begin{eqnarray}
\rG_{\min}(\r_{\text{iso}}(\lambda),\out{uv}{uv}) =
\min\Pa{\sqrt{\frac{d\lambda+1}{d(d+1)}},\sqrt{\frac{1-\lambda}{d^2-1}}}.
\end{eqnarray}

\subsection*{Proof of Proposition \ref{pp:max+min}} \label{app:proof}

For any semi-definite positive matrix $X$, the following inequality
is easily derived via the spectral decomposition of $X$:
\begin{eqnarray}
\label{eq:rankxtrx} \sqrt{\rank (X) \cdot \Tr{X}} \geqslant
\Tr{\sqrt{X}} \geqslant \sqrt{\Tr{X}},
\end{eqnarray}
where the first equality holds if and only if $X$ has identical
nonzero eigenvalues, and the second equality holds if and only if
$X$ has rank one. Let $X=A^{1/2}BA^{1/2}$ for any two semi-definite
positive matrices $A,B$. Then \eqref{eq:rankxtrx} implies
\begin{eqnarray}
\label{eq:trsqrtx} \sqrt{\rank (A^{1/2}B^{1/2}) \cdot
\tr{AB}}\geqslant \rF(A,B)\geqslant\sqrt{\tr{AB}}, ~~~~~~~~~
\end{eqnarray}
where the first equality holds if and only if $A^{1/2}BA^{1/2}$ has
identical nonzero eigenvalues, and the second equality holds if and
only if $A^{1/2}BA^{1/2}$ has rank one. To prove the first
inequality in \eqref{eq:max+min}, let $W_1\ot W_2 = \arg
\rG_{\max}(\r,\s)$. We have
\begin{eqnarray}\label{eq:max+min2'}
&& \rG_{\max}(\r,\s)^2 +
(d_1d_2-1)\rG_{\min}(\r,\s')^2
\notag\\
&&\leqslant  \rF(\r,(W_1\ot W_2)\s(W_1\ot
W_2)^\dagger)^2 + (d_1d_2-1) \rF(\r,(W_1\ot
W_2)\s'(W_1\ot W_2)^\dagger)^2
\notag\\
&& \leqslant  \rank(\r^{1/2} (W_1\ot
W_2)\s^{1/2}(W_1\ot W_2)^\dagger) \Tr{\r(W_1\ot
W_2)\s(W_1\ot W_2)^\dagger}
\notag\\
&&~~~+  \rank(\r^{1/2}(W_1\ot W_2)(\s')^{1/2}(W_1\ot
W_2)^\dagger)(d_1d_2-1)\Tr{\r(W_1\ot W_2)\s'(W_1\ot
W_2)^\dagger}
\notag\\
&& \leqslant \rank(\r) [\Tr{\r(W_1\ot
W_2)\s(W_1\ot W_2)^\dagger}+  (d_1d_2-1)
\tr{\r(W_1\ot W_2)\s'(W_1\ot W_2)^\dagger}]
\notag\\
&&=\rank(\r),
\end{eqnarray}
where the first inequality follows from the definition of
$\rG_{\max}$ and $\rG_{\min}$, and its equality is equivalent to
condition~\eqref{1}. The second inequality in \eqref{eq:max+min2'}
follows from the first inequality in \eqref{eq:trsqrtx} by assuming
$A=\r$, $B=(W_1\ot W_2)\s(W_1\ot W_2)^\dagger$ and
$(W_1\ot W_2)\s'(W_1\ot W_2)^\dagger$, respectively. Its
equality is equivalent to condition~\eqref{2} by the first
inequality in \eqref{eq:trsqrtx}. The third inequality in
\eqref{eq:max+min2'} follows from the fact $\rank (A) \geqslant
\rank (A^{1/2} B^{1/2})$. Its equality holds if condition~\eqref{3}
holds. So we have proved the first inequality, and the three
conditions by which the equality holds in \eqref{eq:max+min}.

To prove the second inequality in \eqref{eq:max+min}, let $V_1\ot
V_2 = \arg \rG_{\min}(\r,\s')$. We have
\begin{eqnarray}\label{eq:max+min2}
&& \rG_{\max}(\r,\s)^2 +
(d_1d_2-1)\rG_{\min}(\r,\s')^2
\notag\\
&&\geqslant \rF(\r,(V_1\ot V_2)\s(V_1\ot
V_2)^\dagger)^2 + (d_1d_2-1)\rF(\r,(V_1\ot
V_2)\s'(V_1\ot V_2)^\dagger)^2
\notag\\
&&\geqslant \tr{\r(V_1\ot V_2)\s(V_1\ot
V_2)^\dagger} + (d_1d_2-1) \tr{\r(V_1\ot
V_2)\s'(V_1\ot V_2)^\dagger}
\notag\\
&&=1,
\end{eqnarray}
where the second inequality follows from \eqref{eq:trsqrtx} by
assuming $A=\r$, $B=(V_1\ot V_2)\s(V_1\ot
V_2)^\dagger$ and $(V_1\ot V_2)\s'(V_1\ot V_2)^\dagger$,
respectively. So we have proved the second inequality in
\eqref{eq:max+min}. The equality in \eqref{eq:max+min} holds if and
only if the first two equalities in \eqref{eq:max+min2} both hold.
The first equality is equivalent to condition~\eqref{4} by the
definition of $\rG_{\max}$ and $\rG_{\min}$, and the second equality
is equivalent to conditions~\eqref{5} and \eqref{6} by
\eqref{eq:trsqrtx}. This completes the proof.


\end{document}